\newtheorem{lemma}{Lemma}
\newtheorem{proposition}{Proposition}
\newenvironment{proof}[1][Proof]{\textbf{#1.} }{\ \rule{0.5em}{0.5em}}
\begin{document}
\title{Obtainment of internal labelling operators as broken Casimir operators by
means of contractions related to reduction chains in semisimple
Lie algebras}

\author{R. Campoamor-Stursberg}

\address{Dpto. Geometr\'{\i}a y Topolog\'{\i}a\\Fac. CC. Matem\'aticas\\
Universidad Complutense de Madrid\\Plaza de Ciencias, 3\\E-28040
Madrid, Spain}

\ead{rutwig@pdi.ucm.es}

\begin{abstract}
We show that the In\"on\"u-Wigner contraction naturally associated
to a reduction chain $\frak{s}\supset \frak{s}^{\prime}$ of
semisimple Lie algebras induces a decomposition of the Casimir
operators into homogeneous polynomials, the terms of which can be
used to obtain additional mutually commuting missing label
operators for this reduction. The adjunction of these scalars that
are no more invariants of the contraction allow to solve the
missing label problem for those reductions where the contraction
provides an insufficient number of labelling operators.

\end{abstract}

\section{Introduction}

One of the main applications of group theoretical methods to
physical problems is related to classification schemes, where
irreducible representations of a Lie  group have to be decomposed
into irreducible representations of a certain subgroup appearing
in some relevant reduction chain
\begin{equation}
\left|
\begin{array}
[c]{ccccccc}%
 \frak{s} & \supset &  \frak{s}^{\prime} & \supset &  \frak{s}^{\prime\prime} & ...\\
\downarrow &  & \downarrow &  & \downarrow & \\
 \left[  \lambda\right]   &  &  \left[
\lambda^{\prime}\right] & &  \left[ \lambda^{\prime\prime}\right]
& ...
\end{array}
\right\rangle. \label{Red1}
\end{equation}
This is the case for dynamical symmetries used for example in
nuclear physics, where one objective of the algebraic model is to
describe the Hamiltonian (or mass operator in the relativistic
frame) as a function of the Casimir operators of the chain
elements. The corresponding energy formulae can the easily deduced
from the expectation values in the reduced representations. As
example, the Gell-Mann-Okubo mass formula can be derived using
this ansatz \cite{Ok}. In many situations, the labels obtained
from the reduction (\ref{Red1}) are sufficient to solve the
problem, e.g., of we require multiplicity free reductions, as used
in $SU(N)$ tumbling gauge models \cite{Sua} or the interacting
boson model \cite{Ia}. However, often the subgroup does not
provide a sufficient number of labels to specify the basis states
unambigously, due to multiplicities greater than one for induced
representations. This turns out to be the rule for non-canonical
embeddings and generic representations of $\frak{g}$. For some
special types, like totally symmetric or anti-symmetric
representations, additional labels are not necessary to solve the
problem, and the degeneracies can be solved directly with the
available operators.

\medskip

Many procedures methods have been developed to solve the so-called
missing label problem (short MLP), from specific construction of
states for the reduction chain to the formal construction of all
possible labelling operators using enveloping algebras \cite{El}.
The latter procedure allows in theory to find the most general
labelling operator, although the effective computation of
admissible operators is rather cumbersome. In addition, there is
no general criterion to compute the number of operators necessary
to generate integrity bases in enveloping algebras.

\medskip

In the mid seventies, Peccia and Sharp \cite{Pe} considered an
analytic approach to the MLP based on the method used to compute
the generalized Casimir invariants of Lie algebras in the
commutative frame. This method is very close to the interpretation
of Casimir operators as functions that are constant on co-adjoint
orbits, and consists essentially on a restriction of this problem.
The labelling operators are interpreted as solutions of a certain
subsystem of partial differential equations corresponding to the
embedded subalgebra. Subgroup scalars are therefore the
differential invariants of the subalgebra in a specific
realization, and their symmetrization serves to separate
multiplicities of reduced representations. Although this approach
is, in principle, computationally easier than the pure algebraic
approach using operators in the enveloping algebra, the
integration of such differential equations is far from being
trivial. Moreover, the orthogonality conditions required to the
labelling operators, in order to avoid undesirable interactions,
must be solved by pure algebraic methods in most cases.

\medskip

Most of the MLP considered in the literature have been solved or
studied from an algebraic or analytic point of view, generally
looking for solutions of lowest degrees. In appearance, no
attention has been paid to the properties that the embedding
imply.\footnote{As known, non-equivalent embeddings lead to
different branching rules, and therefore to different
classification schemes. This is the difference between the Wigner
supermultiplet (nuclear LS coupling) and the strange-spin
multiplet structure of $\frak{su}(4)$ \cite{Is}.} The embedding
$\frak{s}\supset \frak{s}^{\prime}$ is conditioned by physical
reasons, that is, the choice of the embedding class corresponds to
some specific coupling scheme or some relevant internal property
that must be emphasized (like angular momentum). Since the
embedding determines the branching rules for irreducible
representations, it should be expected that labelling operators
needed to solve multiplicities are, in some manner, codified by
the properties of the symmetry breaking determined by the
reduction chain.\newline When dealing with the MLP algebraically
or analytically, it is not immediately clear to which extent we
are using the properties of this embedding or the branching rules.
Moreover, we can ask whether the obtained labelling operators have
some intrinsic or physical meaning at all. Formulated in another
way: are the labelling operators of the MLP completely determined
by the reduction $\frak{s}\supset \frak{s}^{\prime}$ (and
therefore, by the underlying physics), or are they the result of a
formal algebraic/analytic manipulation?

\medskip

In \cite{C72}, the missing label problem was approached from a
quite general point of view, but having in mind the important fact
observed in \cite{Vi} that symmetry breaking and contractions of
Lie algebras have many points in common. This actually is deeply
related to the characterization of inhomogeneous algebras obtained
from contractions of semisimple algebras \cite{C49}. Therefore,
any reduction chain $\frak{s}\supset \frak{s}^{\prime}$ naturally
induces a contraction of Lie algebras. It is therefore natural to
ask if the operators provided by the contraction $\frak{g}$ (the
so-called contracted Casimir operators) can be used to solve the
MLP completely. The idea, in a different form, had been used
previously for angular momentum subalgebra, and can be seen
clearly in the so-called rotor expansion method \cite{El}.

\medskip

The main goal of the general contraction ansatz in labelling
problems can be resumed in the following points:

\begin{enumerate}

\item Find a procedure to solve the  MLP  using explicitly the
properties of the embedding  $\frak{s}\supset \frak{s}^{\prime}$
[breaking symmetry $\leftrightarrow$ contractions], that is,
without invoking formal operators.

\item Justify a  natural  choice of labelling operators as
``broken Casimir operators".

\item Find a phenomenological explanation for the non-integer
expectation values of labelling operators.

\end{enumerate}

In this first development, only the contracted invariants were
used to generate labelling operators. This approach was however
sufficient to solve many physically relevant missing label
problems, and the result were in perfect harmony with those
obtained using other techniques. It was also observed that the
method can fail to find a sufficient number of missing operators
whenever the identity
$\mathcal{N}(\frak{g})=\mathcal{N}(\frak{s})=n$ is satisfied. The
failure is essentially a consequence of an insufficient number of
invariants in the contraction.

\medskip

The aim of this work is to further develop the contraction
procedure, but using not only the contracted invariants, but a
certain decomposition induced in the Casimir operators of the
contracted algebra, which turn out to be subgroup scalars but no
more invariants of the contraction. With this decomposition, we
are able to surmount the difficulty for cases with insufficient
number of contracted operators. A more interesting consequence of
this fact is the possibility of explaining  the existence of
missing label operators of the same degree, as they have already
been constructed in the algebraic frame.

\section{Classical Casimir operators}

Given a presentation $ \frak{s}=\left\{X_{1},..,X_{n}\; |\;
\left[X_{i},X_{j}\right]=C_{ij}^{k}X_{k}\right\}$ of a Lie algebra
$\frak{s}$ in terms of generators and commutation relations, we
are interested in (polynomial) operators
$C_{p}=\alpha^{i_{1}..i_{p}}X_{i_{1}}..X_{i_{p}}$ in the
generators of $\frak{s}$ such that the constraint $
\left[X_{i},C_{p}\right]=0$,\; ($i=1..n$) is satisfied. Such an
operator necessarily lies in the centre of the enveloping algebra
of $\frak{s}$, and is traditionally referred to as Casimir
operator. However, in many dynamical problems, the relevant
invariant functions are not polynomials, but rational or even
trascendental functions (e.g. the inhomogeneous Weyl group).
Therefore the approach with the universal enveloping algebra has
to be generalized in order to cover arbitrary Lie groups. The most
widely used method is the analytical realization. The generators
of the Lie algebra $\frak{s}$ are realized by the differential
operators $\widehat{X}_{i}= C_{ij}^{k} x_{k}
\frac{\partial}{\partial x_{j}}$, where the $x_{i}$ are commuting
variables associated to each generator $X_{i}$. In this approach,
a function $F\in C^{\infty}(\frak{s}^{\prime})$ is an invariant of
$\frak{s}$ if and only if it satisfies the system of PDEs
\begin{equation}
\widehat{X}_{i}(F)=0, \quad  i=1..n.\label{sys}
\end{equation}
Using the symmetrization map
$Sym(x_{1}^{a_{1}}..x_{p}^{a_{p}})=\frac{1}{k!} \sum_{\sigma\in
S_{p}} x_{\sigma(1)}^{a_{1}}...x_{\sigma(p)}^{a_{p}}$, we recover
the Casimir operator for polynomial solutions. With this
analytical ansatz, it is easily seen that the number of
independent solutions is
\begin{equation}
\mathcal{N}\left( \frak{s}\right) =\dim\frak{s}-\;
\mathrm{rank\,}\left[ C_{ij}^{k}x_{k}\right].
\end{equation}

As already commented, one of the main applications of Casimir
operators of Lie algebras is the labelling of irreducible
representations. In a more general approach, irreducible
representations of a Lie algebra $\frak{g}$ can be labelled using
the eigenvalues of its generalized Casimir invariants \cite{Pe}.
For each representation, the number of internal labels needed is
given by
\begin{equation}
i=\frac{1}{2}(\dim \frak{g}- \mathcal{N}(\frak{g})).
\end{equation}
For the special case of semisimple Lie algebras, this number is
deeply related to the number of positive roots of its
complexification. If we use some subalgebra $\frak{g}\supset
\frak{h}$ to label the basis states of $\frak{g}$, the embedding
provides $\frac{1}{2}(\dim
\frak{h}+\mathcal{N}(\frak{h}))+l^{\prime}$ labels, where
$l^{\prime}$ is the number of invariants of $\frak{g}$ that depend
only on variables of the subalgebra $\frak{h}$ \cite{Pe}. In
general, this is still not sufficient to separate multiplicities
of induced representations, so that we need  to find
\begin{equation}
n=\frac{1}{2}\left(
\dim\frak{g}-\mathcal{N}(\frak{g})-\dim\frak{h}-\mathcal{N}(\frak{h})\right)+l^{\prime}
\label{ML}
\end{equation}
additional operators, which are commonly called missing label
operators. The total number of available operators of this kind is
twice the number of needed labels, $m=2n$. For the case $n>1$, it
remains the problem of determining a set of $n$ mutually commuting
operators, as commented before. These operators can be seen to be
subgroup scalars, so that the analytical approach is a practical
method to find the labelling operators.\footnote{In many cases,
the degeneracies of the reduction chain can be solved without
using the subgroup scalars, or are determined by a specifical
ansatz adapted to the involved groups.} Considering the
realization of $\frak{s}\supset \frak{s}^{\prime}$ by differential
operators indicated in (\ref{sys}), we restrict to the PDEs
corresponding to subgroup generators
\begin{equation}
\widehat{X}_{i}= C_{ij}^{k} x_{k} \frac{\partial}{\partial
x_{j}},\quad 1\leq i\leq \dim\frak{s}^{\prime}. \label{SML}
\end{equation}
Solutions to this system $f(\frak{s}^{\prime})$ reproduce the
differential invariants in $\dim \frak{s}$ dimensions. The total
number of solutions of the latter system is given by:
\begin{equation}
\mathcal{N}(f(\frak{s}^{\prime}))=m+\mathcal{N}(\frak{s})+\mathcal{N}(
\frak{s}^{\prime})-l^{\prime}. \label{ML2}
\end{equation}
We observe that (\ref{ML2}) refers to the number of functionally
independent solutions. Integrity bases, constrained by the coarser
algebraic independence, will generally have much more elements
than (\ref{ML2}), and no general procedure is known to compute its
dimension. In addition, to be useful as labelling operators, two
(symmetrized) solutions $F_{1},F_{2}$ of system (\ref{SML}) must
satisfy the orthogonality condition $ \left[F_{1},F_{2}\right]=0$
and commute with the Casimir operators $C_{i}$ and $D_{j}$ of
$\frak{s}$ and the subalgebra $\frak{s}^{\prime}$.  Therefore the
set of commuting operators $\left\{C_{i},D_{j},F_{k}\right\}$
serves to label the states unambiguously.

\medskip

It was proved in \cite{C72} that any reduction chain
$\frak{s}\supset \frak{s}^{\prime}$  is naturally associated to an
In\"on\"u-Wigner contraction $\frak{s}\rightsquigarrow
\frak{g}=\frak{s}^{\prime}\overrightarrow{\oplus}_{R}\left(
\dim\frak{s}-\dim\frak{s}^{\prime}\right) L_{1}$, where the
representation $R$ of $\frak{s}^{\prime}$ satisfies the
constraint\footnote{Actually, the branching rule depends on a
numerical index $j_{f}$ that characterizes the embedding class.}
\begin{equation}
{\rm ad} \frak{s}= {\rm ad} \frak{s}^{\prime}\oplus R.\label{IRD}
\end{equation}
The contraction is easily seen to be defined by the non-singular
transformations
\begin{equation}
\Phi_{t}\left(  X_{i}\right)  =\left\{
\begin{array}
[c]{cc}%
X_{i}, & 1\leq i\leq s\\
\frac{1}{t}X_{i}, & s+1\leq i\leq n
\end{array}
\right.  ,\label{TB}
\end{equation}
where  $\left\{ X_{1},..,X_{s},X_{s+1},..,X_{n}\right\} $ is a
basis of $\frak{s}$  such that  $\left\{  X_{1},..,X_{s}\right\} $
generates the subalgebra $\frak{s}^{\prime}$  and $\left\{
X_{s+1},..,X_{n}\right\} $  spans the representation space $R$. If
both $\frak{s}$ and $\frak{s}^{\prime}$ are semisimple, the
contraction is isomorphic to an inhomogeneous Lie algebra with
Levi decomposition
$\frak{g}=\frak{s}^{\prime}\overrightarrow{\oplus}_{R}\left(
\dim\frak{s}-\dim\frak{s}^{\prime}\right) L_{1}$. Since it
satisfies the condition $\left[\frak{g},\frak{g}\right]=\frak{g}$,
this algebra admits a fundamental basis of invariants consisting
of Casimir operators. Moreover, by  the contraction we have the
inequality $\mathcal{N}(\frak{s})\leq \mathcal{N}(\frak{g})$. The
system of PDEs corresponding to this contraction can be divided
into two parts:

\begin{eqnarray}
\fl \widehat{X}_{i}F=C_{ij}^{k}x_{k}\frac{\partial F}{\partial
x_{j}}=0,& \quad 1\leq i\leq s, \label{KS1}\\
\fl \widehat{X}_{s+i}F=C_{s+i,j}^{s+k}x_{s+k}\frac{\partial
F}{\partial x_{j}}=0, & 1\leq i,k\leq n-s, 1\leq j\leq
s.\label{KS2}
\end{eqnarray}
The subsystem (\ref{KS1}) corresponds to the generators of
$\frak{s}^{\prime}$ realized as subalgebra of $\frak{s}$, while
the remaining equations (\ref{KS2}) describe the representation.
Written in matrix form, the system is given by
\[
\left(
\begin{array}
[l]{ccccll}
0 & ... & C_{1s}^{k}x_{k} &  C_{1,s+1}^{k}x_{k} & ...
&  C_{1,n}^{k}x_{k}\\
\vdots & & \vdots & \vdots & & \vdots\\
 -C_{1s}^{k}x_{k} & ... & 0 &  C_{s,s+1}^{k}x_{k} & ... &
 C_{s,n}^{k}x_{k}\\
 -C_{s,s+1}^{k}x_{k} & ... &  -C_{s,s+1}^{k}x_{k} &  0 &...&
 0\\
\vdots & & \vdots & \vdots & & \vdots\\
 -C_{1n}^{k}x_{k}& ...&  -C_{s,n}^{k}x_{k}&  0& ... & 0
\end{array}
\right)\left(
\begin{array}
[c]{c}
\partial_{x_{1}}F\\
\vdots\\
\partial_{x_{s}}F\\
\partial_{x_{s+1}}F\\
\vdots\\
\partial_{x_{n}}F\\
\end{array}\right)=0.
\]
Since the first $s$ first rows reproduce exactly the system
 of PDEs needed to compute the missing label operators, we
 conclude that any invariant of $ \frak{g}$ is a candidate for missing label
operator. In view of this situation, the following questions arise
naturally:
\begin{enumerate}
\item  Do polynomial functions of the invariants of these algebras
suffice to determine $n$  mutually orthogonal  missing label
operators?
\item  Can all  available  operators found by this
procedure?
\end{enumerate}

The answer to the first equation is in the affirmative for those
cases where the contraction provides a number of independent
invariants exceeding the number of needed labelling operators. It
fails when these two quantities coincide, which suggests that the
procedure has to be refined. The answer to the second question is
generally in the negative (see e.g. the algebras with one missing
label), although it cannot be excluded that with the refinement
proposed in this work we are able to recover a complete set of
independent labelling operators for some special reductions. In
most cases, only half of the available operators should be
expected, since all operators obtained are the result, in some
sense, of ``breaking" the original Casimir operators. Whether a
further refinement allows to obtain additional labelling operators
that are independent remains for the moment an unanswered
question.

\section{Decomposition of Casimir operators}

In this section we prove that the contraction induced by the
reduction chain induces a decomposition of the corresponding
Casimir operators of $\frak{s}$, which allow, among other
properties, to determine the invariants of the contraction
$\frak{g}$. However, other terms are also relevant for the missing
label problem, and constitute the solution to the problem pointed
out in \cite{C72} when the number of invariants of the contraction
is not sufficient. These additional terms do not constitute
invariants of the contraction, and where therefore not considered
in \cite{C72}.

We briefly recall the definition of contracted invariants. Since
classical In\"on\"u-Wigner contractions are the only type of
contractions needed for the labelling problem, we can restrict
ourselves to this case \cite{IW}. Let
$C_{p}(X_{1},...,X_{n})=\alpha^{i_{1}...i_{p}}X_{i_{1}}...X_{i_{p}}$
be a $p^{th}$-order Casimir operator of $\frak{s}$. Then the
transformed invariant takes the form
\begin{equation}
 F(\Phi_{t}(X_{1}),..,\Phi_{t}(X_{n})) =
t^{n_{i_{1}}+...+n_{i_{p}}}\alpha^{i_{1}...i_{p}}X_{i_{1}}...X_{i_{p}},
\end{equation}
where $n_{i_{j}}=0,1$. Taking the maximal power in $t$,
\begin{equation}
 M  = \max \left\{n_{i_{1}}+...+n_{i_{p}}\quad |\quad
\alpha^{i_{1}..i_{p}}\neq 0\right\},
\end{equation}
the limit
\begin{eqnarray*}
F^{\prime}(X_{1},..,X_{n}) =  \lim_{t\rightarrow \infty}
t^{-M}F(\Phi_{t}(X_{1}),...,\Phi_{t}(X_{n}))\\
 =  \sum_{n_{i_{1}}+...+n_{i_{p}}=M}
\alpha^{i_{1}...i_{p}}X_{i_{1}}...X_{i_{p}}
\end{eqnarray*}
provides a Casimir operator of degree $p$ of the contraction
$\frak{g}^{\prime}$. Now, instead of extracting only the term with
the highest power of $t$, we consider the whole decomposition
\begin{equation}
C_{p}=t^{M} C_{p}^{\prime}+ \sum_{\alpha}t^{\alpha} \Phi_{\alpha}
+  \Phi_{0}, \label{COZ}
\end{equation}
where $\alpha <M\leq p$ and $\Phi_{0}$ is a function of the
Casimir operators of the subalgebra $\frak{s}^{\prime}$ (these
generators have not been re-scaled). It is straightforward to
verify that $C_{p}^{\prime}$ is not only an invariant of the
contraction $\frak{g}$, but also a solution to the
MLP.\footnote{For $t=1$, equation (\ref{COZ}) shows how the
Casimir operator decomposes into homogeneous polynomials in the
variables of the subalgebra and the complementary space over the
original basis.} This first term was central to the argumentation
in \cite{C72}, and allowed to obtain commuting sets of labelling
operators. However, the remaining terms can also be individually
considered as candidates for labelling operators, as states the
following

\begin{proposition}
The functions $\Phi_{\alpha}$ are solutions of the missing label
problem, that is, they satisfy the system
\begin{equation}
 \widehat{X}_{i}\Phi_{\alpha} = C_{ij}^{k}x_{k}\frac{\partial
\Phi_{\alpha}}{\partial x_{j}} = 0, \quad 1\leq i\leq s.
\label{ML1}
\end{equation}
\end{proposition}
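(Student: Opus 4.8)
The plan is to exploit the fact that $C_p$, being a Casimir operator of $\frak{s}$, annihilates \emph{all} the realized generators $\widehat{X}_i$, $1\leq i\leq n$, and in particular those with $1\leq i\leq s$ that appear in (\ref{ML1}). The decomposition (\ref{COZ}) is nothing but the splitting of $C_p$ into homogeneous components according to the number of complementary-space variables $x_{s+1},\dots,x_{n}$ occurring in each monomial: the power of $t$ attached to a term records exactly this degree, so $C_p^{\prime}$ collects the terms of maximal degree $M$, the $\Phi_\alpha$ the terms of intermediate degree $\alpha<M$, and $\Phi_0$ the terms of degree $0$ (purely subalgebra variables). Assign to each monomial the weight $w$ equal to this complementary degree; then $C_p=\sum_{w} P_w$ with $P_M=C_p^{\prime}$, $P_\alpha=\Phi_\alpha$ and $P_0=\Phi_0$, and the components $P_w$ live in linearly independent homogeneous subspaces. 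The whole proof reduces to showing that each operator $\widehat{X}_i$ with $i\leq s$ preserves this weight.

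To see this, I would examine the block structure of the structure constants entering $\widehat{X}_i=C_{ij}^{k}x_{k}\,\partial/\partial x_{j}$ for $i\leq s$. The grading condition (\ref{IRD}), $\mathrm{ad}\,\frak{s}=\mathrm{ad}\,\frak{s}^{\prime}\oplus R$, encodes two facts: first, $\frak{s}^{\prime}$ is a subalgebra, so $C_{ij}^{k}=0$ whenever $i,j\leq s$ and $k>s$; second, the complementary space spanning $R$ is $\frak{s}^{\prime}$-invariant, so $C_{ij}^{k}=0$ whenever $i\leq s$, $j>s$ and $k\leq s$. Consequently, in $\widehat{X}_i$ the term with $\partial/\partial x_j$, $j\leq s$, is multiplied only by subalgebra variables $x_k$ ($k\leq s$), while the term with $\partial/\partial x_j$, $j>s$, is multiplied only by complementary variables $x_k$ ($k>s$). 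In either case the operator lowers the complementary degree by one where it differentiates and raises it by one where it multiplies, so the net weight is unchanged. Hence $\widehat{X}_i$ maps each homogeneous component $P_w$ to a (possibly zero) homogeneous component of the same weight $w$.

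The conclusion is then immediate. Applying $\widehat{X}_i$ ($i\leq s$) to $C_p=\sum_w P_w$ and using the invariance $\widehat{X}_i C_p=0$ gives $\sum_w \widehat{X}_i P_w=0$, where by the previous step $\widehat{X}_i P_w$ is homogeneous of weight $w$. Since the homogeneous components of different weight are linearly independent, each summand must vanish separately: $\widehat{X}_i P_w=0$ for every $w$ and every $i\leq s$. Taking $P_w=\Phi_\alpha$ yields exactly (\ref{ML1}).

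I expect the main obstacle to be the careful verification of the weight preservation in the second paragraph: one must confirm that the two vanishing patterns of the structure constants are precisely the subalgebra and representation conditions packaged in (\ref{IRD}), and that the $t$-grading of (\ref{COZ}) coincides with the complementary-degree grading (up to the sign convention in the powers of $t$ induced by (\ref{TB})). Once the bigraded block structure of $\widehat{X}_i$ is in place, the directness of the homogeneous decomposition does the rest, and no integration of the PDE system is required — the result follows purely from the invariance of $C_p$ together with the equivariance built into the reduction chain.
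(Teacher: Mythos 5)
Your proof is correct and takes essentially the same route as the paper: both decompose $C_{p}$ into bi-homogeneous components graded by the degree in the complementary variables $x_{s+1},\dots,x_{n}$, show that the realized subalgebra generators $\widehat{X}_{i}$ ($1\leq i\leq s$) preserve this bi-degree, and then conclude from $\widehat{X}_{i}C_{p}=0$ that each component is annihilated separately, by linear independence of components of distinct bi-degree. Your explicit identification of the two vanishing patterns of the structure constants (the subalgebra condition and the $\frak{s}^{\prime}$-invariance of the complement carrying $R$) is simply a more detailed justification of the step the paper states as ``the result is easily seen to be again a polynomial with the same bi-degree.''
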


\begin{proof}
First of all, the decomposition (\ref{COZ}) tells that the Casimir
operator $C_{p}$ can be rewritten as a sum of homogeneous
polynomials $C_{p}^{\prime}, \Phi_{\alpha}$ with the property that
$C_{p}^{\prime}$ is of homogeneity degree $p-M$ in the variables
$\left\{x_{1},..,x_{s}\right\}$ associated to subalgebra
generators and degree $M$ in the remaining variables
$\left\{x_{s+1},..,x_{n}\right\}$. Accordingly, any
$\Phi_{\alpha}$ is of degree $p-\alpha$ in the variables
$\left\{x_{1},..,x_{s}\right\}$ and $\alpha$ in the
$\left\{x_{s+1},..,x_{n}\right\}$. We denote this by saying that
these functions are of bi-degree $(p-\alpha,\alpha)$.

\smallskip

Now the equations (\ref{KS1}) corresponding to subalgebra
generators remain unaltered by the contraction procedure, since
the re-scaling of generators does not affect them.\footnote{On the
contrary, for the remaining equations the differential operators
of the generators corresponding to the representation have been
modified, and the equations are dependent on $t$.} Thus for any
$1\leq i\leq s$ and any homogeneous polynomial $\Psi$ of bi-degree
$(p-q,q)$ we obtain
\begin{equation}
\widehat{X}_{i}\Psi=C_{ij}^{k}x_{k}\frac{\partial \Psi}{\partial
x_{j}}+C_{ij+s}^{k+s}x_{k+s}\frac{\partial \Psi}{\partial x_{j}},
\end{equation}
and the result is easily seen to be again a polynomial with the
same bi-degree. This means that evaluating $C_{p}=t^{M}
C_{p}^{\prime}+ \sum_{\alpha}t^{\alpha} \Phi_{\alpha} +  \Phi_{0}$
is a sum of polynomials of different bi-degree, and since $C_{p}$
is a Casimir operators, the only possibility is that each term is
a solution of the system. We thus conclude that the
$\Phi_{\alpha}$ are solutions of (\ref{KS1}).
\end{proof}

\medskip

The first question that arises from decomposition (\ref{COZ}) is
how many independent additional solutions we obtain. Since all
$\Phi_{\alpha}$ together sum the Casimir operator, some dependence
relations must exist.

\begin{lemma}
Let $C_{p}$ be a Casimir operator of $\frak{s}$ of order $p$.
Suppose that
\begin{equation}
C_{p}=\Phi_{(p-\alpha_{1},\alpha_{1})}+...+\Phi_{(p-\alpha_{q},\alpha_{q})},\quad
0\leq \alpha_{i}<\alpha_{i+1}\leq p
\end{equation}
is the decomposition of $C_{p}$ into homogeneous polynomials of
bi-degree $(p,q)$.
\begin{enumerate}

\item If $\Phi_{(0,p)}\neq 0$, then at most $q-2$
  polynomials $\Phi_{(p-\alpha_{j},\alpha_{j})}$ are functionally
independent on the Casimir operators of $\frak{s}$ and
$\frak{s}^{\prime}$.

\item If $\Phi_{(0,p)}=0$, then at most $q-1$
  polynomials $\Phi_{(p-\alpha_{j},\alpha_{j})}$ are functionally
independent on the Casimir operators of $\frak{s}$ and
$\frak{s}^{\prime}$.
\end{enumerate}
\end{lemma}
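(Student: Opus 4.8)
The plan is to count independent solutions by combining two facts: each $\Phi_{(p-\alpha_j,\alpha_j)}$ solves the subalgebra system (by the Proposition just proved), and the whole collection is constrained by the single relation $C_p = \sum_j \Phi_{(p-\alpha_j,\alpha_j)}$. So I would first invoke the Proposition to guarantee that every homogeneous piece is a genuine subgroup scalar, hence a legitimate candidate labelling operator. The question is then purely about functional independence modulo the invariants of $\frak{s}$ and $\frak{s}^{\prime}$.

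My first step is to observe that the defining decomposition itself is a nontrivial functional relation among the $q$ pieces: since $C_p$ is (a polynomial in) the Casimir operators of $\frak{s}$, the identity $\sum_j \Phi_{(p-\alpha_j,\alpha_j)} = C_p$ expresses one of the $\Phi$'s in terms of the others together with an invariant of $\frak{s}$. This already drops the count from $q$ to at most $q-1$. I would then identify the extreme terms $\Phi_{(p,0)} = \Phi_0$ and $\Phi_{(0,p)}$. By the remark attached to (\ref{COZ}), the term $\Phi_0$ of bi-degree $(p,0)$ depends only on the subalgebra variables $\{x_1,\dots,x_s\}$, so it is a function of the Casimir operators $D_j$ of $\frak{s}^{\prime}$; hence $\Phi_0$ is never functionally independent of the invariants of $\frak{s}'$ and should be excluded from the count. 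This gives the bound $q-1$ when $\Phi_{(0,p)}=0$ (so the $(p,0)$ term is the only "free" degeneracy we lose beyond the global relation), which is case (ii).

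For case (i), when $\Phi_{(0,p)}\neq 0$, I would argue that this top term of pure bi-degree $(0,p)$ in the complementary variables is itself distinguished: it is (up to scale) the contracted Casimir operator $C_p'$, which by the discussion preceding the Proposition is an invariant of the contraction $\frak{g}$. As such it is expressible through the fundamental invariants of $\frak{g}$ and is thus not an \emph{additional} independent label beyond what the contraction already supplies — it is the term that was already used in \cite{C72}. Removing both this term and the $\Phi_0$ term, on top of the single global relation, leaves at most $q-2$ functionally independent pieces, which is the claimed bound in case (i). The clean way to phrase this is: of the $q$ homogeneous components, the relation $\sum_j\Phi_j=C_p$ together with the dependence of the two extreme terms $\Phi_{(p,0)}$ and $\Phi_{(0,p)}$ on the invariants of $\frak{s}$, $\frak{s}'$, and $\frak{g}$ accounts for the reduction.

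The main obstacle I anticipate is making the notion of "functionally independent on the Casimir operators of $\frak{s}$ and $\frak{s}'$" precise enough to count cleanly, and in particular confirming that the extreme terms are the \emph{only} ones forced to be dependent. One must rule out accidental further functional relations among the intermediate $\Phi_{(p-\alpha_j,\alpha_j)}$ and simultaneously verify that the exclusion of $\Phi_{(0,p)}$ in case (i) is genuinely independent of the global summation relation (so the losses add rather than overlap). I would handle this by working with the Jacobian of the map $(x_i)\mapsto(\Phi_1,\dots,\Phi_q, C_i, D_j)$ and tracking ranks via the distinct bi-degrees, which keep the homogeneous pieces from collapsing into one another except through the two identified mechanisms. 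The bi-degree bookkeeping — each $\Phi_j$ sitting in its own graded component — is precisely what guarantees that no two intermediate terms are proportional, so the only systematic dependences are the three I have isolated.
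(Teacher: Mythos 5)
There is a genuine gap, and it comes from a mix-up over which extreme component of the decomposition is the ``dependent'' one. The paper's own one-line argument is: the relation $\sum_{j}\Phi_{(p-\alpha_{j},\alpha_{j})}=C_{p}$ always costs one term (giving at most $q-1$), and the component of degree zero in the representation variables --- the one the Lemma denotes $\Phi_{(0,p)}$, i.e.\ the piece living entirely in the subalgebra variables $\left\{x_{1},\dots,x_{s}\right\}$ --- is automatically a function of the Casimir operators of $\frak{s}^{\prime}$, so when it is nonzero it costs a second term (giving at most $q-2$). You correctly identify both mechanisms, but you attach the case distinction to the wrong end of the decomposition: you read $\Phi_{(0,p)}$ as the fully contracted leading term $C_{p}^{\prime}$ (pure representation variables) and then discard it on the grounds that it is an invariant of the contraction $\frak{g}$. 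That does not establish what the Lemma asserts: being an invariant of $\frak{g}$ says nothing about functional dependence on the Casimir operators of $\frak{s}$ and $\frak{s}^{\prime}$; in fact $C_{p}^{\prime}$ is generically independent of them, which is exactly why the contracted invariants are usable as labelling operators in \cite{C72}. (The confusion is understandable: the Proposition's proof writes bi-degrees with the subalgebra degree first, while the Lemma and the examples use the opposite order, as is forced by the sentence ``$\Phi_{(0,p)}$ is a function of the Casimir operators of the subalgebra.'')

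The mis-identification also breaks your arithmetic. In your case (ii) you discard both the global summation relation and the pure-subalgebra term $\Phi_{0}$ yet claim the bound $q-1$; whenever $\Phi_{0}$ is actually present these are two distinct losses, since $\sum_{j}\Phi_{j}=C_{p}$ together with $\Phi_{0}$ being a function of the $D_{i}$ forces a relation among the \emph{remaining} terms, so your count would be $q-2$. In your case (i) you discard three things and claim $q-2$. The numbers only come out right with the paper's bookkeeping: one loss from the sum relation in all cases, plus one further loss exactly when the pure-subalgebra component is nonzero (that is the content of the case distinction), and no exclusion of the contracted term. Finally, your concern about ruling out ``accidental'' further relations among the intermediate terms, and the proposed Jacobian rank analysis, are unnecessary here: the Lemma claims only an upper bound (``at most''), not an exact count.
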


The proof follows at once observing that $\Phi_{(0,p)}$ is a
function of the Casimir operators of the subalgebra
$\frak{s}^{\prime}$. The independence on the Casimir operators of
$\frak{s}^{\prime}$ does not imply in general that the
$\Phi_{(p-\alpha,\alpha)}$ obtained are all functionally
independent between themselves. The number of independent terms
depends also on the representation $R$ induced by the reduction
\cite{C23}. In any case, however, at least one independent term is
obtained for any Casimir operator of degree at least three. For
the special case of $n=1$ labelling operator, two terms
independent on the Casimir operators were found, which allowed to
select one as the labelling operator \cite{C72}. Once a set of
functionally independent solutions to system (\ref{KS1}) has been
chosen (including the Casimir operators), the first part of the
labelling problem is solved. Now, if we want to obtain a set of
commuting operators, we have to look for all commutators among the
symmetrized operators $\Phi_{(p-\alpha_{j},\alpha_{j})}$. We
denote by $\Phi_{(p-\alpha_{j},\alpha_{j})}^{symm}$ the
symmetrized polynomial. Then
$\left[\Phi_{(p-\alpha_{j},\alpha_{j})}^{symm},\Phi_{(q-\alpha_{k},\alpha_{k})}^{symm}\right]$
 is a homogeneous polynomial of degree $p+q-1$, and also
constitutes a missing label operator. Actually this brackets is
expressible as sum of polynomials of different bi-degree, and
these terms constitute themselves labelling operators \cite{Jo}. A
procedure to solve the missing label problem can thus be resumed
in the following steps:

\begin{itemize}

\item Decompose the Casimir operators of $\frak{s}$ of degree
$p\geq 3$ with respect to the associated contraction.

\item Determine the commutator of all symmetrized polynomials $
\Phi_{(p-\alpha_{j},\alpha_{j})}^{symm}$ with $\alpha_{j}\neq 0$.

\item From those commuting operators, extract $n$ operators that
are functionally independent from the Casimir operators of
$\frak{s}$ and the subalgebra $\frak{s}^{\prime}$.
\end{itemize}

In general, the second step is reduced to pure computation. There
is no simple procedure to determine whether two missing label
operators are mutually orthogonal, although various symbolic
routines have been developed to compute these brackets (see e.g.
\cite{DM}). In some special circumstances, however, the
decomposition (\ref{COZ}) can provide orthogonality without being
forced to compute the brackets. If for a specific MLP it is known
that no solutions of bi-degree $(r,s)$ exists for some fixed
$r+s=p+q$, and if we have two labelling operators such that
$\left[\Phi_{(p-\alpha_{j},\alpha_{j})}^{symm},\Phi_{(q-\alpha_{k},\alpha_{k})}^{symm}\right]$
is a sum of polynomials of bi-degree $(s,r)$, then the commutation
follows at once. This idea was first explored systematically in
\cite{Jo}. We remark that in the commutative frame, it would
suffice to show that no polynomial function of bi-degree $(r,s)$
is a solution to subsystem (\ref{KS1}).\footnote{The labelling
operator in the enveloping algebra of $\frak{s}$ follows as the
symmetrized form of such functions.}

\section{Examples}

In this section we show how the decomposition of Casimir operators
of higher order provide solutions to missing label problem that
could not be solved completely by only using the contraction, or
for which no proposed set of labelling operators has been computed
yet. We insist on the fact that the main difficulty in the formal
approach to the MLP resides in obtaining a sufficient number of
(functionally) independent labelling operators, from which a
commuting set can be extracted.

\subsection{$G_{2}\supset \frak{su}(2)\times\frak{su}(2)$}

This chain was indicated in \cite{C72} to give an insufficient
number of labelling operators when only the contraction invariants
are considered. Actually, in this case we have
$n=\frac{1}{2}\left(14-2-6-2\right)=2$ labelling operators, and
the inhomogeneous contraction $G_{2}\rightsquigarrow
(\frak{su}(2)\times\frak{su}(2))\overrightarrow{\oplus}_{R}8L_{1}$
preserves the number of invariants. This means that we would only
obtain one additional operator, since the (contracted) operator of
order two is of no use. Now, the method failed because it did not
take into account the decomposition of the sixth order operator
into homogeneous polynomials of bi-degree $(p,q)$ in the
variables. We show that, with this decomposition, we obtain a
complete solution to the MLP related to the chain $G_{2}\supset
\frak{su}(2)\times\frak{su}(2)$. To this extent, we choose the
same tensor basis used in \cite{Hu1} consisting of the generators
$\left\{j_{0},j_{\pm},k_{0},k_{\pm},R_{\mu,\nu}\right\}$ with
$\mu=\pm \frac{3}{2},\pm \frac{1}{2}$, $\nu=\pm\frac{1}{2}$. The
generators $R_{\mu,\nu}$ are related to an irreducible tensor
representation $R$ of $\frak{su}(2)\times\frak{su}(2)$ of order
eight. In this case, the contraction $G_{2}\rightsquigarrow
(\frak{su}(2)\times\frak{su}(2))\overrightarrow{\oplus}_{R}8L_{1}$
is obtained considering the transformations:
\[
j_{0}^{\prime }=j_{0},\;j_{\pm }^{\prime }=j_{\pm
},\;k_{0}^{\prime
}=k_{0},k_{\pm }^{\prime }=k_{\pm },\;R_{\mu ,\nu }^{\prime }=\frac{1}{t}%
R_{\mu ,\nu }.
\]
If we decompose now the Casimir operators $C_{2}$ and $C_{6}$ over
the transformed basis, we obtain the following decomposition
\begin{equation}
\begin{array}[l]{ll}
C_{2}=t^{2}C_{(2,0)}+ C_{(0,2)}, & \\
C_{6}=t^{6}C_{(6,0)}+t^{4}C_{(4,2)}+t^{2}C_{(2,4)}+C_{(0,6)},&
\end{array}
\end{equation}
where  $C_{(0,2)},C_{(0,6)}$  are functions of the Casimir
operators of $\frak{su}(2)\times\frak{su}(2)$. Since $C_{(2,0)}$
is functionally dependent on the invariants of $G_{2}$ and
$\frak{su}(2)\times\frak{su}(2)$, it is not further useful. Now it
can be verified that
\begin{equation}
\frac{\partial{\left(C_{2},C_{6},C_{21},C_{22},C_{(2,4)},C_{(4,2)}\right)}}
{\partial{\left(k_{0},k_{-},j_{0},j_{+},R_{\frac{3}{2},\frac{1}{2}},R_{-\frac{3}{2},\frac{1}{2}}\right)}}\neq
0,
\end{equation}
where $C_{21}$ and $C_{22}$ are the quadratic Casimir operators of
$\frak{su}(2)\times\frak{su}(2)$. This provides us with six
independent operators. A long and tedious computation, due to the
quite high number of terms before and after symmetrization, shows
moreover that the chosen operators commute:
\begin{equation}
\begin{array}[l]{ll}
\left[C_{i},C_{(2,4)}\right]=\left[C_{i},C_{(4,2)}\right]=0, &
i=2,6\\
\left[C_{(4,2)},C_{(2,4)}\right]=0.&
\end{array}
\end{equation}
Therefore the set
$\left\{C_{2},C_{6},C_{21},C_{22},C_{(2,4)},C_{(4,2)}\right\}$ can
be taken to solve the labelling problem.

\smallskip

It should be remarked that a direct comparison with the operators
obtained in \cite{Hu1} is quite difficult, for various reasons. At
first, there the scalars in the enveloping algebra were
considered, not symmetrizations of functions, which implies that
lower order terms where considered when explicitly indicating the
labelling operators. On the other hand, we have only distinguished
the bi-degree, that is, the degree of the polynomials in the
variables of the $\frak{su}(2)\times\frak{su}(2)$ subalgebra and
the tensor representation $R$, while in \cite{Hu1} the order with
respect to any of the copies of $\frak{su}(2)$ was considered,
resulting in operators labelled with three indices. Therefore the
operators $C_{(p,q})$ considered here correspond to the sum of
several scalars there. In addition, our solution contains the term
$C^{(114)}$ excluded in \cite{Hu1},\footnote{This is a scalar
having degree one in each of the copies of $\frak{su}(2)$ and four
in the $R_{\mu,\nu}$ generators.} confirming that the pair of
commuting operators obtained above is different from that found
previously. We also remark that a further distinction of the
degrees of the polynomials $\Phi_{(a,b)}^{symm}$ in the variables
of the $\frak{su}(2)$ copies is not possible due to the
contraction.

\subsection{The chain $\frak{sp}(6)>\frak{su}(3)\times
\frak{u}(1)$}

The unitary reduction of the symplectic Lie algebra of rank three
has found ample applications in the nuclear collective model
\cite{RoG}. In this case, nuclear states are classified by means
of irreducible representations of $\frak{sp}(6)$ reduced with
respect to the unitary subalgebra $\frak{su}(3)\times
\frak{u}(1)$. Since the induced representations are not
multiplicity free, we have to add $n=3$ labelling operators to
distinguish the states. Generating functions for this chain were
studied in \cite{GaK}, but without obtaining explicitly the three
required operators. In this section, we will determine a commuting
set of labelling operators that solves the MLP for this reduction.
As we shall see, this case cannot be solved using only the
invariants of the associated contraction.

\smallskip

We will use the Racah realization for the symplectic Lie algebra
$\frak{sp}\left(6,\mathbb{R}\right)$. We consider the generators
$X_{i,j}$ with $-3\leq i,j\leq 3$ satisfying the condition
\begin{equation}
X_{i,j}+\varepsilon_{i}\varepsilon_{j}X_{-j,-i}=0,
\end{equation}
where $\varepsilon_{i}={\rm sgn}\left( i\right) $. Over this
basis, the brackets are given by
\begin{equation}
\left[  X_{i,j},X_{k,l}\right]  =\delta_{jk}X_{il}-\delta_{il}X_{kj}%
+\varepsilon_{i}\varepsilon_{j}\delta_{j,-l}X_{k,-i}-\varepsilon
_{i}\varepsilon_{j}\delta_{i,-k}X_{-j,l}, \label{Kl3}
\end{equation}
where $-3\leq i,j,k,l\leq 3$. The three Casimir operators
$C_{2},C_{4},C_{6}$ of $\frak{sp}\left( 6,\mathbb{R}\right)  $ are
easily obtained as the coefficients of the characteristic
polynomial
\begin{equation}
\left |  A-T {\rm Id}_{6}\right| =T^{6}+
C_{2}T^{4}+C_{4}T^{2}+C_{6},
\end{equation}
where
\begin{equation}
A=\left(
\begin{array}
[c]{cccccc}%
x_{1,1} & x_{2,1} & x_{3,1} & -I x_{-1,1} & -I x_{-1,2} & - I x_{-1,3}\\
x_{1,2} & x_{2,2} & x_{3,2} & -I x_{-1,2} & -I x_{-2,2} & -I x_{-2,3}\\
x_{1,3} & x_{2,3} & x_{3,3} & -I x_{-1,3} & -I x_{-2,3} & -I
x_{-3,3}\\
I x_{1,-1} & I x_{1,-2} & I x_{1,-3} & -x_{1,1} & -x_{1,2} & -x_{1,3}\\
I x_{1,-2} & I x_{2,-2} & I x_{2,-3} & -x_{2,1} & -x_{2,2}
&-x_{2,3}\\
I x_{1,-3} & I x_{2,-3} & I x_{3,-3} & -x_{3,1} & -x_{2,3}
&-x_{3,3}
\end{array}
\right). \label{M0}
\end{equation}
The symmetrized operators give the usual polynomials in the
enveloping algebra. Since the unitary algebra $\frak{u}(3)$ is
generated by $\left\{X_{i,j} | 1\leq i,j\leq 3\right\}$, in order
to write $\frak{sp}\left( 6,\mathbb{R}\right)$ in a
$\frak{su}(3)\times \frak{u}(1)$ basis, it suffices to replace the
diagonal operators $X_{i,i}$ by suitable linear combinations.
Taking $H_{1}=X_{1,1}-X_{2,2},\; H_{2}=X_{2,2}-X_{3,3}$ and
$H_{3}=X_{1,1}+X_{2,2}+X_{3,3}$ we obtain the Cartan subalgebra of
$\frak{su}(3)$, while $H_{3}$ commutes with all $X_{i,j}$ with
positive indices $i,j$. The invariants over this new basis are
simply obtained replacing the $x_{i,i}$ by the corresponding
linear combinations of $h_{i}$. The contraction $\frak{sp}(6)
\rightsquigarrow (\frak{su}(3)\times
\frak{u}(1))\overrightarrow{\oplus}_{R}12L_{1}$, where $R$ is the
complementary to $({\rm ad}(\frak{su}(3)\otimes (1))$ in the
adjoint representation of $\frak{sp}(6)$:\footnote{More precisely,
$R$ decomposes into a sextet and antisextet with $\frak{u}(1)$
weight $\pm 1$ and a singlet with $\frak{u}(1)$ weight $1$.}
\[
{\rm ad}\frak{sp}(6)= ({\rm ad}\frak{su}(3)\otimes (1))\oplus R.
\]
The contraction is determined by the transformations
\begin{equation}
H_{i}^{\prime}=H_{i},\; X_{i,j}^{\prime}=X_{i,j},\;
X_{-i,j}^{\prime}=\frac{1}{t}X_{-i,j},\;
X_{i,-j}^{\prime}=\frac{1}{t}X_{i,-j},\quad 1\leq i,j\leq 3.
\end{equation}
The contraction $(\frak{su}(3)\times
\frak{u}(1))\overrightarrow{\oplus}_{R}12L_{1}$ satisfies
$\mathcal{N}=3$, thus has 3 Casimir operators that can be obtained
as contraction of $C_{2},C_{4},C_{6}$. Note however that $n=3$,
thus the invariants of the contraction will provide at most two
independent missing label operators. This means that using only
the contraction, we cannot solve the MLP for this chain. In order
to find a third labelling operator, we have to consider the
decomposition of the fourth and sixth order Casimir operators of
$\frak{sp}(6)$. Over the preceding transformed basis we obtain:
\begin{equation}
\begin{array}[l]{ll}
C_{4}= t^{4} C_{(4,0)}+ t^{2} C_{(2,2)}+ C_{(0,4)},& \\
C_{6}= t^{6} C_{(6,0)}+ t^{4} C_{(4,2})+ t^{2} C_{(2,4)}+
C_{(0,6)},&
\end{array}
\end{equation}
where $C_{(k,l)}$ denotes a homogeneous polynomial of $k$ in the
variables of $R$ and degree $l$ in the variables of the unitary
subalgebra. The $C_{(0,k)}$ are functions of the Casimir operators
of $\frak{su}(3)\times \frak{u}(1)$, and therefore provide no
labelling operators. We remark that, before symmetrization,
$C_{(2,2)}$ has $126$ terms, $C_{(2,4)}$ $686$ terms, and
$C_{(4,2)}$ 444 terms. The symmetrized operators
$C_{(2,2)},C_{(4,2)}$ and $C_{(2,4)}$ can be added to the Casimir
operators of $\frak{sp}(6)$ and the subalgebra $\frak{su}(3)\times
\frak{u}(1)$, and the 9 operators can be seen to be (functionally)
independent.
\begin{equation}
\begin{array}[l]{ll}
\left[C_{i},C_{(2,2)}\right]=\left[C_{i},C_{(4,2)}\right]=\left[C_{i},C_{(2,4)}\right]=0,&
i=2,4,6. \\
\left[C_{(2,2)},C_{(4,2)}\right]=\left[C_{(2,2)},C_{(2,4)}\right]=\left[C_{(2,4)},C_{(4,2)}\right]=0.&
\end{array}
\end{equation}

\subsection{Applications to the conformal algebra}

Among the many important problems in Physics where the conformal
group $SO(2,4)$ plays an impotant role, like the dynamical
non-invariance group of hydrogen-like atoms, the application to
the periodic charts of neutral atoms in ions was first considered
in \cite{Bar}. This direction was followed to classify chemical
elements by various authors \cite{Ru}. More recently, the
conformal group and its invariants are in the centre of the more
ambitious program KGR, in order to obtain quantitative predictions
of the periodic table of elements \cite{Ki,Ki2}. To this extent,
the set formed by the three Cartan generators and the Casimir
operators (of degrees $2,3$ and $4$), which commute between
themselves, can be used to label certain physical properties.
However, as noted by Racah \cite{Ra}, this set is still not
sufficient for classification purposes. We have to add three
additional operators\footnote{If $r$ denotes the dimension of a
semisimple Lie algebra $\frak{s}$ and $l$ its rank, the number
$f=\frac{1}{2}\left(r-3l\right)=3$ is usually referred to as the
Racah number.} to obtain a complete set of commuting operators
that solve labelling problems. This follows at once if we consider
the missing label problem for the Cartan subalgebra. In this case
\[
n=\frac{1}{2}\left(15-3-3-3\right)=3.
\]
Therefore the Racah operators can be identified with labelling
operators for the reduction chain determined by the Cartan
subalgebra. To exemplify the procedure, we compute the Racah
operators for the conformal algebra. We use the fact that it is
isomorphic to the Lie algebra $\frak{su}(2,2)$. We start from the
the $\frak{u}(2,2)$-basis formed by the operators $\left\{
E_{\mu\nu},F_{\mu\nu}\right\}  _{1\leq\mu,\nu\leq p+q=n}$ with the constraints%
\begin{eqnarray}
E_{\mu\nu}+E_{\nu\mu}=0,\;F_{\mu\nu}-F_{\nu\mu}=0,\nonumber\\
g_{\mu\mu}=(\left(1,1,-1,-1\right).\nonumber
\end{eqnarray}
The brackets are then given by
\begin{eqnarray}
\left[  E_{\mu\nu},E_{\lambda\sigma}\right]   &
=g_{\mu\lambda}E_{\nu\sigma
}+g_{\mu\sigma}E_{\lambda\nu}-g_{\nu\lambda}E_{\mu\sigma}-g_{\nu\sigma
}E_{\lambda\mu}\label{b1}\\
\left[  E_{\mu\nu},F_{\lambda\sigma}\right]   &
=g_{\mu\lambda}F_{\nu\sigma
}+g_{\mu\sigma}F_{\lambda\nu}-g_{\nu\lambda}F_{\mu\sigma}-g_{\nu\sigma
}F_{\lambda\mu}\\
\left[  F_{\mu\nu},F_{\lambda\sigma}\right]   &
=g_{\mu\lambda}E_{\nu\sigma
}+g_{\nu\lambda}E_{\mu\sigma}-g_{\nu\sigma}E_{\lambda\mu}-g_{\mu\sigma
}E_{\lambda\nu}\label{b2}%
\end{eqnarray}
To recover the conformal algebra, we take the Cartan subalgebra
spanned by the vectors
$H_{\mu}=g_{\mu+1,\mu+1}F_{\mu\mu}-g_{\mu\mu}F_{\mu+1,\mu+1}$ for
$\mu=1..3$. The centre of $\frak{u}(p,q)$ is obviously generated
by $g^{\mu\mu}F_{\mu\mu}$.

\begin{proposition}
A maximal set of independent Casimir invariants of
$\frak{su}\left(  2,2\right)  $ is given by the coefficients
$C_{k}$ of the characteristic polynomial $\left| I A-\lambda {\rm
Id}_{N}\right| =\lambda^{4}+\sum_{k=2}^{4}D_{k}\lambda^{4-k}$,
where
\begin{equation}
A=\left(
\begin{array}
[c]{ccccc}%
-I(\frac{3}{4}h_{1}-\frac{1}{2}h_{2}+\frac{1}{4}h_{3}) &-e_{12}-I
f_{12} & e_{13}+I f_{13} & e_{14}+I f_{14} \\
e_{12}-I f_{12} & I (\frac{1}{4} h_{1}+\frac{1}{2}
h_{2}-\frac{1}{4} h_{3}) & e_{23}+I f_{23} & e_{24}+I
f_{24}\\
e_{13}-I f_{13} & e_{23}-I f_{23} & I (\frac{1}{4}
h_{1}-\frac{1}{2} h_{2}-\frac{1}{4} h_{3}) & e_{34}+I f_{34}\\
e_{14} -I f_{14} & e_{24}-I f_{24} &-e_{34}+I f_{34} & I
(\frac{1}{4}h_{1}-\frac{1}{2}h_{2}+\frac{3}{4}h_{3})
\end{array}
\right). \label{Tv}
\end{equation}
\end{proposition}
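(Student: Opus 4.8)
The plan is to establish that the coefficients $C_k$ of the stated characteristic polynomial furnish a maximal set of functionally independent Casimir invariants of $\frak{su}(2,2)$. Since $\frak{su}(2,2)$ is semisimple of rank three, we know a priori that $\mathcal{N}(\frak{su}(2,2))=3$ and that it admits precisely three primitive invariants, of degrees $2$, $3$ and $4$. The strategy therefore has two logical components: first, to verify that each coefficient $C_k$ genuinely is an invariant, i.e. annihilated by the differential operators $\widehat{X}_i$ attached to all generators via the realization in (\ref{sys}); and second, to confirm that the three coefficients so obtained are functionally independent and hence constitute a complete fundamental system.

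For the invariance, my approach is to exhibit the matrix $A$ in (\ref{Tv}) as (a rescaling of) the image of the adjoint or defining representation evaluated on the generic algebra element, so that conjugation by the group acts on $A$ by similarity. The key observation is that the characteristic polynomial $\left|IA-\lambda\,{\rm Id}_N\right|$ is a similarity invariant of the matrix $A$, so its coefficients are automatically constant along the orbits of the coadjoint action. Concretely, I would check that replacing each generator $E_{\mu\nu},F_{\mu\nu}$ by its commutator action — equivalently, applying the vector fields $\widehat{X}_i$ built from the structure constants (\ref{b1})--(\ref{b2}) — amounts to an infinitesimal similarity transformation $\delta A=[M,A]$ for some matrix $M$ depending on the chosen generator. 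Since $\det(IA-\lambda\,{\rm Id})$ is unchanged under $A\mapsto A+\epsilon[M,A]$ to first order, each coefficient $D_k$ (equivalently $C_k$) satisfies $\widehat{X}_i(C_k)=0$. This reduces the invariance claim to the single algebraic identity that the entries of $A$, as arranged in (\ref{Tv}) with the metric $g_{\mu\mu}=(1,1,-1,-1)$ and the factor $I=\sqrt{-1}$ inserted to account for the noncompact signature, transform by conjugation under the algebra action — a statement that follows from the bracket relations once the correspondence between $(E_{\mu\nu},F_{\mu\nu})$ and the matrix positions is fixed.

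For functional independence, I would argue that the three coefficients arise in degrees $2$, $3$ and $4$ in the variables, matching the known degrees of the primitive invariants of $\frak{su}(2,2)$; since invariants of distinct homogeneity degrees whose degrees coincide with the fundamental degrees cannot satisfy a nontrivial functional relation (such a relation would force a lower-degree invariant that does not exist), the three are independent, and by the count $\mathcal{N}=3$ they are maximal. Alternatively one computes the Jacobian with respect to three suitably chosen variables and checks it is generically nonzero, in the same spirit as the independence checks carried out in the examples above.

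The main obstacle I anticipate is bookkeeping rather than conceptual: one must pin down exactly which matrix $M$ implements each generator's action and verify that the asymmetric placement of signs and of the imaginary unit $I$ in (\ref{Tv}) — in particular the entries $-e_{12}-If_{12}$ versus $e_{12}-If_{12}$ and the varying coefficients $\tfrac{3}{4},\tfrac{1}{2},\tfrac{1}{4}$ on the diagonal — is precisely what is needed to make $A$ an element of (the complexified, metric-adjusted) $\frak{su}(2,2)$ so that the similarity argument applies. Getting the reality conditions and the metric contractions $g_{\mu\lambda}$ consistent with the constraint $E_{\mu\nu}+E_{\nu\mu}=0$, $F_{\mu\nu}-F_{\nu\mu}=0$ is the delicate point; once the matrix realization is shown to be faithful and equivariant, the invariance of the characteristic coefficients is immediate and the independence follows from the degree count.
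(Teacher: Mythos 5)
The paper offers no proof of this proposition at all: it is asserted as an instance of the classical Racah/Gel'fand construction, exactly as the analogous statement for $\frak{sp}(6,\mathbb{R})$ is asserted earlier ("easily obtained as the coefficients of the characteristic polynomial"). Your proposal supplies the standard argument that the paper leaves implicit, and its core is sound: $A$ is the generic element of the algebra written in the defining $4\times 4$ representation (with the metric $g_{\mu\mu}=(1,1,-1,-1)$ and the factor $I$ absorbing the non-compact reality conditions), the action of each generator $\widehat{X}_i$ on the entries is an infinitesimal similarity $\delta A=[M,A]$, and the coefficients of $\left|IA-\lambda\,{\rm Id}\right|$ are therefore annihilated by every $\widehat{X}_i$. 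Combined with $\mathcal{N}(\frak{su}(2,2))=\mathrm{rank}=3$, three independent invariants are automatically maximal. This is the right route, and arguably the only reasonable one.

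One caveat: your primary argument for functional independence — that invariants of distinct homogeneity degrees matching the fundamental degrees $2,3,4$ "cannot satisfy a nontrivial functional relation because it would force a lower-degree invariant" — does not hold as stated. A relation of the form $C_4=\lambda C_2^2$ or $C_3^2=\mu C_2^3$ is weighted-homogeneous, relates only the given invariants, and produces no new lower-degree invariant, so degree bookkeeping alone cannot exclude it. You must either invoke the classical fact (Chevalley) that the characteristic-polynomial coefficients of the generic matrix generate a free polynomial ring of invariants for $\frak{sl}(4)$ and its real forms, or fall back on your stated alternative of exhibiting a nonvanishing $3\times 3$ Jacobian with respect to suitably chosen variables — which is precisely the style of verification the paper itself performs in its $G_2$ and $\frak{su}(2,2)$ examples. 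With that substitution the proof is complete.
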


The classical Casimir operators are obtained symmetrizing the
functions $C_{k}$. In order to compute the Racah operators, we
consider the MLP for the chain $\frak{h}\subset \frak{su}(2,2)$,
where $\frak{h}$ denotes the Cartan subalgebra. The corresponding
contraction\footnote{In this case, the contraction is no more an
inhomogeneous Lie algebra. The procedure remains however valid,
which suggests that it could also be valid for non-semisimple
algebras.} is defined by the non-singular transformations
\[
H^{\prime}_{i}=\frac{1}{t}H_{i},\quad i=1,2,3.
\]
According to this contraction, the Casimir operators decompose as
follows:
\begin{equation}
\begin{array}[l]{ll}
C_{2}=  t^{2}C_{(2,0)} +C_{(0,2)}, &\\
C_{3}=  t^{3}C_{(3,0)}+ t^{2}C_{(2,1)}+ C_{(0,3)}, &\\
C_{4}=  t^{4}C_{(4,0)}+ t^{3}C_{(3,1)}+t^{2}C_{(2,2)}+C_{(0,4)}, &
\end{array}
\end{equation}
where the $C_{(0,i)}$ are functions of $h_{1},h_{2},h_{3}$. The
functions $I_{ij}$ are all solutions to the MLP. In order to
complete the set of orthogonal operators
$\left\{H_{1},H_{2},H_{3},C_{2},C_{3},C_{4}\right\}$ with three
mutually commuting labelling operators, we first extract those
triples that are functionally independent from the Casimir
operators of $\frak{su}(2,2)$ and the $h_{i}$. We can take for
example $C_{(3,0)},C_{(4,0)},C_{(3,1)}$. Since
\begin{equation}
\frac{\partial{\left(H_{1},H_{2},H_{3},C_{2},C_{3},C_{4},C_{(3,0)},C_{(4,0)},C_{(3,1)}\right)}}
{\partial{\left(h_{1},h_{2},h_{3},e_{12},e_{13},e_{14},f_{23},f_{24},f_{34}\right)}}\neq
0,
\end{equation}
these operators are independent. A somewhat more laborious
computation shows that the symmetrized forms of
$C_{(2,1)},C_{(4,0)},C_{(3,1)}$ satisfy the commutators
\begin{equation}
\begin{array}[l]{ll}
\left[C_{i},C_{(3,0)}\right]=\left[C_{i},C_{(4,0)}\right]=
\left[C_{i},C_{(3,1)}\right]=0, & i=1,2,3.\\
\left[C_{(3,0)},C_{(4,0)}\right]=\left[C_{(3,0)},C_{(3,1)}\right]
=\left[C_{(4,0)},C_{(3,1)}\right]=0.&
\end{array}
\end{equation}

Since the $C_{(i,j)}$ are solutions to the MLP, they commute with
any function of the generators $H_{i}$. In contrast to the
previous cases, the length of the labelling operators never
exceeds $70$ terms before symmetrization. In conclusion, the set
$\left\{H_{1},H_{2},H_{3},C_{2},C_{3},C_{4},C_{(3,0)},C_{(4,0)},C_{(3,1)}\right\}$
is complete formed by commuting operators. We observe that linear
combinations of the three Racah operators are of potential use to
describe chemical and physical properties like ionization energy,
atomic volume or magnetic properties. Since this identification
relies heavily on experimental data \cite{Ru}, it remains to
compute the corresponding eigenvalues for irreducible
representations (IRREPs) of $\frak{su}(2,2)$, which constitutes a
quite hard numerical problem. This task is in progress.

\section{Conclusions}

The method of contraction is useful to solve the MLP when the
number of invariants of the contraction associated to the
reduction chain $\frak{s}\supset \frak{s}^{\prime}$ exceeds the
number of needed labelling operators. In the case where the
invariants of the inhomogeneous contraction do not suffice to find
a complete solution of the missing label problem, it is expectable
that labelling operators of the same degree appear. This suggests
that further terms of the Casimir operators of $\frak{s}$ that
disappear during the contraction can be useful to complete the set
of missing label operators. We have shown that the contraction
induces a decomposition of the Casimir operators, the terms of
which are all solutions to the MLP. From these terms a set of $n$
independent labelling operators can be extracted, reducing the
problem to determine which combinations are mutually orthogonal.
In this sense, the method proposed in \cite{C72} is
 a first approximation to solve the MLP using the properties
of reduction chains, which however turns out to be useful in most
practical cases. The bi-degree of the Casimir operators of a Lie
algebra with respect to the variables associated to the generators
of a subalgebra are therefore a relevant tool to obtain and
classify these labelling operators, although further distinction
of terms, for example when the subalgebra consists of various
copies, is also convenient to deduce additional
operators.\footnote{This turns out to be the case for the chain
$\frak{su}(4)\supset \frak{su}(2)\times\frak{su}(2)$ \cite{El}.}
This subdivision cannot however be deduced from the contraction,
since all generators of the subalgebra play the same role.

\medskip

Some important aspects of the decomposition method of Casimir
operators based on the contractions and its use in labelling
problems are specially emphasized:

\begin{itemize}
\item  The solutions provide a  ``natural"  choice for the
labelling operators. Their interpretation as ``broken" Casimir
operators confers them a certain physical meaning, in contrast to
operators obtained by pure algebraic means, where the physical
interpretation of the operator is often not entirely clear.

\item  The decomposition provides also a consistent explanation to
the question why a number of reduction chains give labelling
operators of the same degree. This fact is directly related to an
insufficient number of invariants in the contraction associated to
the chain.

\item This could probably explain why the eigenvalues of such
labelling operators are not integers, as already indicated by
Racah \cite{Ra}. It follows from the decomposition that the
eigenvalues of the labelling operators contribute to the
eigenvalues of the Casimir operators. In this context, the
interpretation of a labelling operator as ``broken" Casimir
operator leads to the idea of ``broken" integer eigenvalues.
\end{itemize}

Some questions still remain open, namely, whether there exist
reductions $\frak{s}\supset \frak{s}^{\prime}$ for which the
method followed here provides all available labelling operators.
An answer in this direction implies to find the general solution
to the MLP for each considered chain. Nowadays, only for a few
number of algebras these computations have been carried out
completely \cite{El,Pa}. A complete study of all physically
relevant reduction chains involving simple Lie algebras up to some
fixed rank would certainly provide new insights to this problem.
On the other hand, in can also not be excluded that for reduction
chains with a great number of labelling operators, the terms of
the decomposition are not sufficient to construct a set of
independent labelling operators. To which extent the invariants of
the contraction not appearing as contracted operators play a role
must still be analyzed.\footnote{Such situations appear, e.g.,
considering a simple algebra of high rank and regular subalgebras
of low rank. If the induced representation (\ref{IRD}) contains
copies of the trivial representation, then the generators
associated to these will play the role of labelling operators. It
happens moreover that these generators cannot be obtained
contracting the Casimir operators of $\frak{s}$. This situation is
however unlike to appear in some physically interesting
case.}\newline Another problem, still in progress, is to obtain
complete sets of commuting operators for all simple Lie algebras,
using the MLP determined by the Cartan subalgebra. The commented
application to the periodic charts of atoms in only one of the
problems where this special type of reductions have been shown to
be of interest in developing algebraic models in molecular physics
or nuclear spectroscopy \cite{Oss}.

\subsection*{Acknowledgments}
This work was partially supported by the research project
MTM2006-09152 of the Ministerio de Educaci\'on y Ciencia. The
author expresses his gratitude to M. R. Kibler and Ch. Quesne for
useful discussions and remarks, as well as to G. Vitiello and J.
Van der Jeugt for additional references.

\section*{References}

\end{document}